\newcommand{\bc}[0]{B\&C}
\newcommand{\bct}[0]{BC($T$)}
\newcommand{\dpllt}[0]{DPLL($T$)}
\newcommand{\ie}[0]{\emph{i.e.}, }
\newcommand{\eg}[0]{\emph{e.g.}, }
\newcommand{\D}[0]{\ensuremath{\Delta}}
\newcommand{\kD}[0]{\ensuremath{k\text{-}\Delta}}
\newcommand{\inez}[0]{\textsf{Inez}}
\newcommand{\inezdb}[0]{\textsf{Inez\textit{DB}}}
\newcommand{\sem}[1]{\ensuremath{\llbracket #1 \rrbracket}}
\newcommand{\set}[1]{\ensuremath{\{#1\}}}
\newcommand{\setc}[2]{\ensuremath{\{#1\ |\ #2\}}}
\newcommand{\comment}[1]{}
\newtheorem{definition}{Definition}
\newtheorem{example}{Example}
\newtheorem{lemma}{Lemma}
\newtheorem{theorem}{Theorem}
\newtheorem{fact}{Fact}
\newcommand{\true}{\ensuremath{\mathsf{true}}}
\newcommand{\ED}{\ensuremath{\exists \D{}}}
\DeclareMathOperator{\oleft}{\mathsf{left}}
\DeclareMathOperator{\oright}{\mathsf{right}}
\DeclareMathOperator{\existsd}{\exists\hspace{-4pt}\exists}
\DeclareMathOperator{\omax}{\mathsf{max}}
\DeclareMathOperator{\omin}{\mathsf{min}}
\DeclareMathOperator{\ocount}{\mathsf{count}}
\DeclareMathOperator\lb{lb}
\DeclareMathOperator\ub{ub}
\DeclareMathOperator\lbp{lb^\prime}
\DeclareMathOperator\ubp{ub^\prime}
\DeclareMathOperator\rank{rank}
\DeclareMathOperator\match{match}
\newcommand\sfsm{\sf \small}
\newcommand{\sel}[3]{\langle \sigma\hspace{2pt}#1 : #2 : #3 \rangle}
\newcommand{\selll}[4]{
  \begin{aligned} 
    \langle \sigma\hspace{2pt}#1&:&& #2~\wedge& \\
    &&& #3 \\
    &:&& #4 &\rangle \\
  \end{aligned}
}
\newcommand{\sellll}[5]{
  \begin{aligned} 
    \langle \sigma\hspace{2pt}#1&:&& #2~\wedge& \\
    &&& #3~\wedge \\
    &&& #4 \\
    &:&& #5 &\rangle \\
  \end{aligned}
}
\newcommand{\parname}[1]{\noindent \textbf{#1}:}
\newcommand{\quotes}[0]{\ensuremath{\mathtt{quotes}}}
\newcommand{\portfolio}[0]{\ensuremath{\mathtt{portfolio}}}
\newcommand{\stocks}[0]{\ensuremath{\mathtt{stocks}}}
\def\@citex[#1]#2{\leavevmode
\let\@citea\@empty
\@cite{\@for\@citeb:=#2\do
{\@citea\def\@citea{,\penalty\@m\ }%
\edef\@citeb{\expandafter\@firstofone\@citeb\@empty}%
\if@filesw\immediate\write\@auxout{\string\citation{\@citeb}}\fi
\@ifundefined{b@\@citeb}{\hbox{\reset@font\bfseries ?}%
\G@refundefinedtrue
\@latex@warning
{Citation `\@citeb' on page \thepage \space undefined}}%
{\@cite@ofmt{\csname b@\@citeb\endcsname}}}}{#1}}
\title{ILP Modulo Data}
\author{\IEEEauthorblockN{Panagiotis Manolios, Vasilis Papavasileiou,
    and Mirek Riedewald}\thanks{This research was supported in part by DARPA under AFRL
Cooperative Agreement No.~FA8750-10-2-0233 and by NSF grants
CCF-1117184 and CCF-1319580.
} \\
  \IEEEauthorblockA{Northeastern University\\
    \texttt{\{pete,vpap,mirek\}@ccs.neu.edu}}}
\begin{document}

\maketitle

\begin{abstract}
  The vast quantity of data generated and captured every day has led
  to a pressing need for tools and processes to organize, analyze and
  interrelate this data. Automated reasoning and optimization tools
  with inherent support for data could enable advancements in a
  variety of contexts, from data-backed decision making to
  data-intensive scientific research. To this end, we introduce a
  decidable logic aimed at database analysis. Our logic extends
  quantifier-free Linear Integer Arithmetic with operators from
  Relational Algebra, like selection and cross product. We provide a
  scalable decision procedure that is based on the \bct{} architecture
  for ILP Modulo Theories. Our decision procedure makes use of
  database techniques. We also experimentally evaluate our approach,
  and discuss potential applications.
\end{abstract}

\section{Introduction}

In 2010, enterprises and users stored more than 13 exabytes of
new data~\cite{bigdatawhitepaper}. Database Management Systems
(DBMS's) based on the Relational Model~\cite{codd} are a key component
in the computing infrastructure of virtually any organization.
With big data playing a determining role in business and science,
we are motivated to rethink data management and analysis.

\comment{SMT solvers like Z3~\cite{z3} enable interesting applications
  for various kinds of software by integrating arithmetic and
  software- or hardware- oriented theories (\eg lists and arrays).}

Database systems capable of symbolic computation could enable powerful
new methodologies for strategic planning, decision making, and
scientific research. We propose database systems that
\begin{inparaenum}[(a)]
\item store symbolic (in addition to concrete) data, and at the same
  time
\item allow queries of a symbolic nature, \eg with free variables.
\end{inparaenum} Such database systems can be dually thought of as
constraint solvers that reason in the presence of data. Symbolic data
allows us to encode partially specified or entirely speculative
information, \eg database entries that exist for the purpose of
what-if analysis. Symbolic queries enable deductive reasoning about
data.

Existing relational query languages (\eg SQL) only allow
concrete data and queries. Symbolic enhancements require a
formalism that combines constraints and relational queries. We address
this need by introducing the \D{}~logic. \D{}~extends quantifier-free
Linear Integer Arithmetic (QFLIA) with database tables and operators
from Relational Algebra, like selection ($\sigma$), union ($\cup$),
and cross product ($\times$). While \D{} is decidable, the logic in
its general form gives rise to hard satisfiability problems, primarily
because it allows universal quantification over cross products of big
tables. We study unrestricted~\D{} (for it is a natural umbrella
formalism), but also provide restrictions that enable an efficient
decision procedure. In other words, we identify a class of database
problems that are a realistic initial target for formal analysis.

We provide a scalable procedure based on the \bct{} architecture for
ILP Modulo Theories (IMT)~\cite{cav2013}. Our approach is dubbed
\emph{ILP Modulo Data}, because an ILP solver co-exists with a
procedure that establishes a correspondence between integer variables
and database tables. The latter contain a mix of concrete and symbolic
data. ILP Modulo Data allows us to use a powerful ILP solver based on
branch-and-cut (\bc{}) on the arithmetic side, while also utilizing
database techniques that allow us to scale to realistic datasets.

The compositional nature of ILP Modulo Data is well-suited for
potential applications. Organizations have access to vast amounts
of data, but at the same time rely heavily on Mathematical
Programming technology. We enhance Mathematical Programming tools
with the ability to directly access data, thus assisting
data-backed decision making. Such tools would also benefit
scientists in fields ranging from ornithology~\cite{mirekdddm} to
astronomy~\cite{skyserver}, by providing immediate feedback on
the consistency between models the scientists devise and datasets
of observations they collect. Our paper outlines potential
applications, while our experimental evaluation relies on
benchmarks that characterize them. We experimentally demonstrate
that our ILP Modulo Data framework provides better performance
than the approach of eagerly reducing \D{} to QFLIA.

\vspace{4pt}
\parname{Paper Structure} Section~\ref{sec:motivation} introduces our
reasoning paradigm through a motivating
example. Section~\ref{sec:logic} presents the \D{}~logic, while
Section~\ref{sec:ed} identifies a \D{}~fragment that yields scalable
procedures. Section~\ref{sec:bct} describes our decision procedure. We
experimentally evaluate our approach in
Section~\ref{sec:experiments}. We provide an overview of related work
in Section~\ref{sec:related}, and conclude with
Section~\ref{sec:conclusions}.

\section{Motivating Example}
\label{sec:motivation}

Our motivating example (formalized in Figure~\ref{fig:portfolio})
concerns the problem of optimally investing a given amount of
capital. This is an appropriate application for our techniques, because
\begin{inparaenum}[(a)]
\item investments are almost always data-driven as they take
  historical stock prices into account, and
\item financial institutions already rely on Mathematical Programming.
\end{inparaenum}

The problem involves investing in a \emph{portfolio} of $n$ publicly
traded stocks, with the goal of maximizing profit while following
guidelines that minimize risk. A database provides information on
these stocks, including stock prices from the New York Stock Exchange
(NYSE). We would like to pick the $n$ stocks that would have yielded
the highest profit over a period of time in the recent past, \eg over
the preceding year. This optimization problem is subject to
risk-mitigation constraints that require us to pick companies from a
variety of sectors. While investing in the exact solver-generated
portfolio (which relies only on past performance) is not necessarily a
good strategy, such a portfolio provides useful information for the
analysts who make the final investment decisions.

\begin{figure}[t]

  \center

  \subfloat [\stocks{}] {
    \label{fig:portfolio:stocks}
    \boxed{
      \begin{tabular}{l|l|l}
        Id & Cap & Sector \\
        \hline
        1 ({\sfsm EMC}) & \sfsm large & \sfsm tech \\
        2 ({\sfsm FII}) & \sfsm medium & \sfsm financials \\
        3 ({\sfsm AKR}) & \sfsm small & \sfsm retail \\
        \ldots & \ldots & \ldots \\
      \end{tabular}
    }
  }
  ~
  \subfloat [\quotes{}] {
    \label{fig:portfolio:quotes}
    \boxed{
      \begin{tabular}{l|l}
        Id & Diff \\
        \hline
        1 & 128 \\
        2 & 117 \\
        3 & 89 \\
        \ldots & \ldots \\
      \end{tabular}
    }
  }

  \subfloat [Constraints] {

    \small

    \(
    \begin{array}{ll}
      \textbf{maximize} & \\
      \hspace{4pt} \Sigma_{1 \leq i \leq n} a_i \cdot d_i \\
      \textbf{subject to} & \\
      \hspace{4pt}
      (x_i, c_i, s_i) \in \stocks{}, &\ 1 \leq i \leq n \\
      \hspace{4pt}
      (x_i, d_i) \in \quotes{}, &\ 1 \leq i \leq n \\
      \hspace{4pt}
      x_i \neq x_j, &\ 1 \leq i < j \leq n \\
      \hspace{4pt}
      \Sigma_{\setc{i}{1 \leq i \leq n, s_i = s}}
      a_i \leq \Sigma_{1 \leq i \leq n} a_i  / 3, & \text{ for every
        sector $s$} \\
      \hspace{4pt}
      \Sigma_{\setc{i}{1 \leq i \leq n, c_i = {\mbox{\sf \footnotesize
              small}}}}
      a_i \leq \Sigma_{1 \leq i \leq n} a_i  / 4
    \end{array}
    \)

  }
  
  \caption{Portfolio Management with ILP Modulo Data}
  \label{fig:portfolio}

\end{figure}

The data is given in tables \stocks{} and \quotes{}
(Figures~\ref{fig:portfolio:stocks}
and~\ref{fig:portfolio:quotes}). Each company in \stocks{} is
described by a unique ID (with the associated NYSE symbol
parenthesized), its capitalization ({\sfsm small}, {\sfsm medium}, or
{\sfsm large}), and its sector (\eg {\sfsm tech}, {\sfsm retail},
{\sfsm financials}, {\sfsm automotive}, {\sfsm energy}, {\sfsm
  emerging-markets}). While Figure~\ref{fig:portfolio} uses
human-readable names, we can encode these fields with bounded integer
quantities. Each entry in \quotes{} describes the observed movement of
a certain stock in a given timeframe, assuming that dividends were
reinvested. For example, the first row describes an increase of 28\%
in the price of {\sfsm EMC}. \quotes{} is an application-specific
abstraction, \ie the actual database contains past stock prices and
\quotes{} is a \emph{view} produced by comparing data for two time
periods.

The $i^{th}$ stock in the portfolio is characterized by a unique ID
$x_i$ that corresponds to entries in the dataset, \ie there exist
entries $(x_i, c_i, s_i) \in \stocks{}$ and $(x_i, d_i) \in
\quotes{}$. \comment{($c_i$, $s_i$, and $d_i$ are matching values for
  capitalization, sector, and price difference, respectively.)} To
minimize risk, we force the $n$ IDs $x_i$ to be distinct, and allow no
single sector to account for more than a third of the total capital.
Additionally, no more than a fourth of the capital goes to smallcap
companies. \comment{(The latter tend to be riskier investments.)} The
objective function maximizes the capital at the end of the period, and
thus the profit.

Note that if the amounts $a_i$ are variables, the objective function
is non-linear. The problem can be circumvented by providing integer
constants for $a_i$, \ie by specifying how the capital will be
partitioned. \comment{(We still have to determine which stock
  corresponds to each $a_i$.)} With constants for $a_i$, the non-table
constraints are essentially in QFLIA. (The summations for $i$ that
satisfy conditions like $s_i = s$ and $c_i = \mbox{\sfsm small}$ are
easy to encode as sums of if-then-else terms.) Conversely, the problem
is essentially satisfiability of an arithmetic instance, where certain
variables correspond to database contents. This is the kind of problem
that we propose new techniques for. We cannot use a standalone DBMS,
since DBMS's do not handle constraints and optimization. Neither are
existing solvers up to the task, since they do not provide ways of
managing data.

The constraints we have described are meant to be
representative. Clearly investors also have to consider other options,
including investing in index funds, bonds, debt securities and
derivative contracts. These financial instruments may have other
characteristics that need to be modeled. Our constraints are also
based on simplifying assumptions, \eg that we can invest an arbitrary
amount in any given stock at any time. It is not within the scope of
our paper to model investment problems comprehensively. What matters
is that these additional concerns also mix arithmetic with data, thus
reinforcing the need for data-aware solving.

\section{The Logic~\D{}}
\label{sec:logic}

\begin{figure}[h]

  \center

  \begin{minipage}{.8\columnwidth}
    \begin{grammar}

      <$F$> ::= $T_1 \leq T_2$ | $\existsd D$ | $\neg F$ | $F_1 \vee
      F_2$

      <$D$> ::= $\set{T^{+}}$ | $\sel{x}{F}{D}$ |
      $D_1 \times D_2$ | $D_1 \cup D_2$

      <$T$> ::= $(T_1, T_2)$ | $\oleft(T)$ | $\oright(T)$ | \\
      \vspace{2pt} \hspace*{2pt} $x$ | $K$ | $K \cdot T$ | $T_1 + T_2$ 

      <$K$> ::= $\ldots$ | $-2$ | $-1$ | $0$ | $1$ | $2$ | $\ldots$

    \end{grammar}
  \end{minipage}

  \caption{Grammar of~\D{}}
  \label{fig:syntax}

\end{figure}

This Section introduces the logic~\D{}. \D{} combines arithmetic with
queries over tabular data. \D{} thus encompasses database problems
like our motivating example of Section~\ref{sec:motivation}.

The grammar of~\D{} is given in Figure~\ref{fig:syntax}.  $K$, $T$,
$D$, and $F$ are the non-terminal symbols for integer constants,
terms, tables, and formulas, respectively. The first line of
productions for $T$ corresponds to pairs and their accessors; the
second line is for variable symbols ($x$) and integer expressions. A
table (non-terminal symbol $D$) is either an \emph{input table}, a
\emph{selection}, a \emph{cross-product}, or a \emph{union}. The
selection $\sel{x}{F}{D}$ is a table that consists of only those
entries in $D$ that satisfy $F$, \ie the variable $x$ ranges over the
table entries; $\sigma$ binds $x$ in $F$, but not in $D$. For formulas
(non-terminal symbol $F$), $\existsd D$ should be read as ``$D$ is not
empty''. \comment{$\existsd$ introduces a form of quantification, as
  we will see later.} All other constructs bear the obvious
meaning. We assume that all variables not bound by $\sigma$ are
integer. \comment{We call such variables \emph{free}.} We will freely
use derived operators, \eg conjunction and integer equality.

\D{} is typed. Each term is either of type \texttt{int} or of type $s
* t$, where $s$ and $t$ are types. $\oleft{}$ and $\oright{}$ are only
permissible when applied to a term of type $s * t$ for some type $s$
and some type $t$; if $x$ is of type $s * t$, then $\oleft(x)$ is of
type $s$ and $\oright(x)$ is of type $t$. The integer constants are of
type \texttt{int}.  The arithmetic operators ($+$, $\cdot$, and
$\leq$) only apply to terms of type \texttt{int}; $+$ and $\cdot$
produce integers. Each table has a \emph{schema}, which is the type of
its entries. (Schemas are the table-level counterpart of types.) An
input table is comprised of entries of the same type. If table $D_1$
has schema $s_1$ and table $D_2$ has schema $s_2$, then $D_1 \times
D_2$ has schema $s_1 * s_2$. For $\sel{x}{F}{D}$ to be properly typed,
$F$ should be a properly-typed formula under the assumption that the
type of $x$ is the schema of $D$; the schema of $\sel{x}{F}{D}$ is the
same as the schema of $D$. Union expects tables of the same schema and
preserves it.

Clearly, \D{} is at least as powerful as QFLIA. At the same time, \D{}
encompasses most features one would expect from a relational query
language. We have left out certain operators usually present in query
languages.  First, note that projection ($\pi$) would not provide
additional power, since it is possible to refer to any subset of the
columns, without producing an intermediate table that leaves out the
irrelevant ones. Also, the set difference $A \setminus B$ can be
encoded as $\sel{a}{\neg \existsd \sel{b}{a = b}{B}}{A}$, assuming
that the schema of $A$ and $B$ has exactly one column; otherwise, in
place of $a = b$ we would have a conjunction of equalities over all
columns. Additionally, \D{} can express many forms of aggregation,
including $\ocount$ (when compared to a constant), $\omin$, and
$\omax$.

\begin{example}
  The portfolio encoded by Figure~\ref{fig:portfolio} can be
  represented as the input table $$\portfolio{} = \{(1, (x_1, a_1)),
  \ldots, (n, (x_n, a_n))\}.$$ \portfolio{} contains symbolic data,
  something which is not allowed by DBMS's. The first column ensures
  that the $n$ entries are distinct, irrespective of the
  assignment. \portfolio{} is of schema $\mathtt{int} *
  (\mathtt{int} * \mathtt{int})$.  Consider the following constraint:
  $$
  \neg \existsd \boxed{ \selll{ x } { \oleft(\oleft(x)) \neq
      \oleft(\oright(x)) } { \oleft(\oright(\oleft(x))) =
      \oleft(\oright(\oright(x))) } { \boxed{\portfolio{}} \times
      \boxed{\portfolio{}} } }$$
  The constraint states that there are no entries $(i, (x_i, a_i))$
  and $(j, (x_j, a_j))$ in \portfolio{} such that $i \neq j$ and $x_i
  = x_j$, \ie \portfolio{} references $n$ distinct stocks (as was our
  intention in Figure~\ref{fig:portfolio}). The constraint essentially
  involves universal quantification over $\portfolio{} \times
  \portfolio{}$.
  \label{ex:distinct}
\end{example}

\subsection{Decidability}
\label{ssec:decidability}

\D{}~satisfiability can be reduced to QFLIA satisfiability. We explain
the reduction briefly. We represent a table expression $D$ of schema
$s$ as a set $\sem{D}$ consisting of pairs $r \oslash b$, where $r$ is
a term of type $s$ and $b$ is a QFLIA formula, with the intended
meaning that $r$ is present in the table iff $b$ is true. We use the
operator $\oslash$ to distinguish the auxiliary pairs used for the
reduction from the ones allowed by the syntax of \D{}. For a formula
$F$, $\sem{F}$ denotes the corresponding formula in QFLIA;
similarly for integer terms. $F[x / r]$ stands for substituting $x$
with $r$ in $F$, with appropriate care for occurrences of the
symbol $x$ bound by $\sigma$ inside $F$. We define $\sem{\cdot}$
for tables and formulas below as two mutually recursive functions.
\begin{equation}
  \begin{aligned}
    \sem{\set{r_1, \ldots, r_n}} = &\ 
    \{r_1 \oslash \true{}, \ldots, r_n \oslash \true{}\} \\
    \sem{\sel{x}{F}{D}} = &\ 
    \{r \oslash (b \wedge \sem{F[x / r]})\ |\
    r \oslash b \in \sem{D} \} \\
    \sem{D_1 \times D_2} = &\ 
    \{(r_1, r_2) \oslash (b_1 \wedge b_2)\ |\ \\
    &\hspace{16pt}r_1 \oslash b_1 \in
    \sem{D_1}, r_2 \oslash b_2 \in \sem{D_2} \} \\
    \sem{D_1 \cup D_2} = &\ \sem{D_1} \cup \sem{D_2}
  \end{aligned}
  \label{eq:sem:table}
\end{equation}
\vspace{6pt}
\begin{equation}
  \begin{aligned}
    \sem{T_1 \leq T_2} = & \sem{T_1} \leq \sem{T_2} \\
    \sem{\existsd D} = & \bigvee_{r \oslash b \in \sem{D}} b \\
    \sem{\neg F} = & \neg \sem{F} \\
    \sem{F_1 \vee F_2} = & \sem{F_1} \vee \sem{F_2}
  \end{aligned}
  \label{eq:sem:formula}
\end{equation}
For encoding \D{} integer terms as QFLIA terms (\eg $\sem{T_i}$ in
Equation~\ref{eq:sem:formula}), all that needs to be done is
elimination of pair constructors and accessors via the rules
$\oleft((x, y)) = x$ and $\oright((x, y)) = y$. The reduction suffices
to establish decidability of~\D{}.  The reduction also provides formal
semantics for~\D{} by specifying its meaning in terms of QFLIA.

\subsection{Complexity}

\begin{theorem}
  \label{thm:nexptime}
  The satisfiability problem for \D{} is in NEXPTIME.
\end{theorem}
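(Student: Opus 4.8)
The plan is to build directly on the translation $\sem{\cdot}$ of Section~\ref{ssec:decidability}, which turns a $\D$~formula $\phi$ into an equisatisfiable QFLIA formula $\sem{\phi}$, and to combine it with the classical fact that QFLIA~satisfiability is in $\mathrm{NP}$. The NEXPTIME bound then follows once we show two things: (i) $\sem{\phi}$ has size at most singly exponential in $|\phi|$, i.e.\ $|\sem{\phi}| \le 2^{\mathrm{poly}(|\phi|)}$, and (ii) $\sem{\phi}$ is computable from $\phi$ in deterministic time $2^{\mathrm{poly}(|\phi|)}$. Given these, a nondeterministic machine first constructs $\sem{\phi}$, then guesses and verifies a satisfying assignment for it; since QFLIA solutions can be taken to have bit-length polynomial in the formula size, the guess has size $\mathrm{poly}(|\sem{\phi}|) = 2^{\mathrm{poly}(|\phi|)}$, and the whole procedure runs in nondeterministic time $2^{\mathrm{poly}(|\phi|)} \le 2^{|\phi|^c}$, witnessing membership in NEXPTIME.

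The core of the argument is the size bound, and the right way to carry it out is to track \emph{two} quantities through the mutual recursion of Equations~(\ref{eq:sem:table})--(\ref{eq:sem:formula}), rather than a single size parameter. The first is a \emph{combinatorial size}: the number $\#\sem{D}$ of $\oslash$-pairs produced for a table $D$ (and, analogously, the number of disjuncts produced for an $\existsd D$). Here $\#\sem{\set{r_1,\dots,r_n}}=n$, $\#\sem{D_1\cup D_2}=\#\sem{D_1}+\#\sem{D_2}$, $\#\sem{\sel{x}{F}{D}}=\#\sem{D}$, and the only multiplicative rule is $\#\sem{D_1\times D_2}=\#\sem{D_1}\cdot\#\sem{D_2}$; since $\phi$ contains at most $|\phi|$ occurrences of $\times$, an easy induction gives $\#\sem{D}\le 2^{O(|\phi|)}$. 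The second quantity is the \emph{term width}: the maximum size of a $\D$~term $r$ appearing in an intermediate pair $r \oslash b$. For a table expression alone this stays polynomial, because pairing is additive and selection and union do not enlarge terms; the only source of blowup is the substitution $F[x/r]$ triggered by nested selections, where a nesting of depth $d\le|\phi|$ multiplies the width by at most a polynomial factor per level, yielding width $2^{O(|\phi|\log|\phi|)}$ — still singly exponential. Composing the two bounds: each side-formula $b$ is a conjunction over at most $|\phi|$ selection levels of formulas $\sem{F[x/r]}$, and $\sem{\existsd D}$ is a disjunction of at most $2^{O(|\phi|)}$ such $b$'s, so every intermediate formula, and in particular $\sem{\phi}$, has size at most (disjunct/pair count)$\times$(term width)$\times\mathrm{poly}=2^{O(|\phi|\log|\phi|)}$. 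The same recursion read as an algorithm runs in time polynomial in the output, hence in $2^{\mathrm{poly}(|\phi|)}$, establishing (ii).

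With (i) and (ii) in hand the remaining step is the standard $\mathrm{NP}$-membership of QFLIA satisfiability. Guessing a truth value for each linear atom of $\sem{\phi}$ (which also resolves the derived disequalities and strict inequalities) reduces the question to feasibility of a system of linear inequalities over the integers; by the Papadimitriou-style small-solution bounds for integer programming, a feasible such system has an integer solution whose bit-length is polynomial in the system's size. Applying this to $\sem{\phi}$, the nondeterministic machine guesses an assignment of bit-length $\mathrm{poly}(|\sem{\phi}|)$ and checks it in time $\mathrm{poly}(|\sem{\phi}|)$, i.e.\ in time $2^{\mathrm{poly}(|\phi|)}$. Correctness is immediate, since $\sem{\phi}$ is equisatisfiable with $\phi$ by the reduction of Section~\ref{ssec:decidability}.

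The main obstacle is precisely the size analysis in the second paragraph, and the subtle point is that a naive one-parameter induction of the form ``$|\sem{G}|\le 2^{\mathrm{poly}(|G|)}$'' is too coarse: substitution can make terms singly-exponentially large, so feeding such a $G$ back through that bound would give a \emph{doubly}-exponential — and useless — estimate. Separating the combinatorial size (which is non-increasing and always bounded by $|\phi|$) from the term width (which grows, but only to $2^{O(|\phi|\log|\phi|)}$) is exactly what keeps the whole construction at single-exponential size. Everything else — the $\mathrm{NP}$ bound for QFLIA and the bookkeeping that single-exponential deterministic preprocessing followed by a single-exponential nondeterministic guess-and-check stays within NEXPTIME — is routine.
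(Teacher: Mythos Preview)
Your argument is correct and follows exactly the paper's route: reduce $\D$ to QFLIA via $\sem{\cdot}$, observe that the blowup is at most singly exponential, and then invoke $\mathrm{NP}$-membership of QFLIA to obtain a nondeterministic exponential-time procedure. You supply considerably more detail than the paper's sketch (in particular the two-parameter size analysis), with only a minor slip in the final paragraph where you call the combinatorial size ``non-increasing and bounded by $|\phi|$'' after having (correctly) bounded it by $2^{O(|\phi|)}$ earlier.
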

\begin{proof}[Proof Sketch]
  The reduction to QFLIA (Equations~\ref{eq:sem:table}
  and~\ref{eq:sem:formula}) produces a formula exponentially larger
  than the input. Since QFLIA is in NP, the reduction provides a
  non-deterministic exponential time procedure
  for~\D{}-satisfiability.
\end{proof}

\begin{theorem}
  \label{thm:pspace}
  The satisfiability problem for \D{} is PSPACE-hard.
\end{theorem}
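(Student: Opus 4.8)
The plan is to give a polynomial-time (in fact linear) reduction from the truth problem for quantified Boolean formulas, TQBF, which is the standard PSPACE-complete problem, to \D{}-satisfiability. The key observation is that, over the two-element table $B = \set{0, 1}$ of schema $\mathtt{int}$, the construct $\existsd \sel{y}{F}{B}$ behaves exactly like existential quantification of a Boolean-valued variable $y$: using Equations~\ref{eq:sem:table} and~\ref{eq:sem:formula}, $\sem{\existsd \sel{y}{F}{B}}$ simplifies to $\sem{F}[y/0] \vee \sem{F}[y/1]$. Dually, $\neg \existsd \sel{y}{\neg F}{B}$ encodes $\forall y \in \set{0,1}.\ F$, since its semantics simplify to $\sem{F}[y/0] \wedge \sem{F}[y/1]$.

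First I would fix a QBF instance $\Phi = Q_1 x_1\, Q_2 x_2 \cdots Q_n x_n\ \phi(x_1, \ldots, x_n)$ with $\phi$ a propositional formula. I translate $\phi$ into a QFLIA formula $\hat{\phi}$ with free integer variables $y_1, \ldots, y_n$ by replacing each positive literal $x_i$ with the atom $y_i = 1$, each negative literal $\neg x_i$ with $y_i = 0$, and keeping all Boolean connectives. Then I build the \D{} formula from the inside out along the quantifier prefix: set $F_{n+1} = \hat{\phi}$, and for $k = n, n-1, \ldots, 1$ let $F_k = \existsd \sel{y_k}{F_{k+1}}{B}$ if $Q_k = \exists$, and $F_k = \neg \existsd \sel{y_k}{\neg F_{k+1}}{B}$ if $Q_k = \forall$. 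The output of the reduction is the closed, well-typed \D{} formula $F = F_1$ (well-typed because every $y_i$ and every entry of $B$ has type $\mathtt{int}$, so $\hat\phi$ is a legitimate QFLIA formula); its size is linear in $|\Phi|$, since each quantifier contributes only a constant amount of syntax and the matrix $\hat{\phi}$ occurs exactly once.

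Next I would establish correctness by downward induction on $k$ (from $n+1$ to $1$): for every assignment $\alpha$ of $\set{0,1}$-values to $y_1, \ldots, y_{k-1}$, the formula $\sem{F_k}$ evaluates to \true{} under $\alpha$ iff $Q_k x_k \cdots Q_n x_n\ \phi$ is true under the matching assignment to $x_1, \ldots, x_{k-1}$. The base case is immediate from the construction of $\hat{\phi}$, and the inductive step is exactly the two simplifications of $\sem{\existsd \sel{y_k}{\cdot}{B}}$ noted above, combined with the induction hypothesis. Specializing to $k = 1$, the closed formula $F = F_1$ has $\sem{F}$ a ground arithmetic formula whose truth value is \true{} iff $\Phi$ is true; and since $F$ is closed, the decidability reduction of Section~\ref{ssec:decidability} gives that $F$ is \D{}-satisfiable iff $\sem{F}$ is satisfiable iff $\sem{F}$ is true. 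Hence $F$ is \D{}-satisfiable iff $\Phi$ is true, which establishes PSPACE-hardness.

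I do not expect a serious obstacle: the construction is the crux, and once one notices that $\existsd \sel{\cdot}{\cdot}{\set{0,1}}$ is Boolean existential quantification, the rest is a mechanical verification through Equations~\ref{eq:sem:table} and~\ref{eq:sem:formula}. The only points requiring a little care are the double-negation encoding of $\forall$, the well-typedness check, and confirming the reduction is polynomial — which it is, because the QBF matrix is copied only once. (As an aside, a block of like quantifiers could alternatively be collapsed into a single $\existsd$ over a cross product $B \times \cdots \times B$ with coordinate accessors, but the nested form above already suffices.)
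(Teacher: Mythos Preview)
Your proposal is correct and follows essentially the same approach as the paper: a polynomial-time reduction from QBF that encodes Boolean quantification over the input table $\set{0,1}$ via $\existsd \sel{y}{F}{B}$ for $\exists$ and its double-negation dual for $\forall$. You supply considerably more detail than the paper's sketch (the explicit inductive correctness argument and the well-typedness and size checks), but the construction and the key idea are the same.
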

\begin{proof}[Proof Sketch]
  We reduce the (PSPACE-complete) QBF problem to \D{}~satisfiability
  in polynomial time. We deal with Boolean quantification by
  quantifying over the input table $\mathcal{B} = \{0, 1\}$. For
  example, the formula $\forall x \exists y (x \vee \neg y)$ becomes
  $\neg \existsd
  \boxed{
    \sel{x}{
      \neg \existsd
      \boxed{
        \sel{y}{x = 1 \vee y = 0}{\mathcal{B}}
      }
    }{
      \mathcal{B}
    }
  }.$
\end{proof}

Complexity analysis of~\D{} beyond Theorems~\ref{thm:nexptime}
and~\ref{thm:pspace} is not within the scope of this paper, and has
mostly theoretical significance. In practice, query size is orders of
magnitude smaller than data size. Conversely, it is meaningful to
study \emph{data complexity}~\cite{vardicomplexity}, \ie complexity
where only the amount of data varies. Instead of assuming a query of
constant size, we provide a stronger result by limiting the number of
tables that can participate in a cross product. (We also limit nested
quantifiers, because the latter can simulate cross products.)  We
define below the $\rank$ function that characterizes this number.
\begin{equation}
  \begin{aligned}
    \rank(\set{r_1, \ldots, r_n}) = &\ 1 \\
    \rank(\sel{x}{F}{D}) = &\ \rank(F) + \rank(D) \\
    \rank(D_1 \times D_2) = &\ \rank(D_1) + \rank(D_2) \\
    \rank(D_1 \cup D_2) = &\ \max(\rank(D_1), \rank(D_2))
  \end{aligned}
  \label{eq:rankd}
\end{equation}
\begin{equation}
  \begin{aligned}
    \rank(T_1 \leq T_2) = &\ 0 \\
    \rank(\existsd D) = &\ \rank(D) \\
    \rank(\neg F) = &\ \rank(F) \\
    \rank(F_1 \vee F_2) = &\ \max(\rank(F_1), \rank(F_2))
  \end{aligned}
  \label{eq:rankf}
\end{equation}

\begin{definition}[\kD{}]
  For any natural number $k$, \kD{} is the set of formulas $\setc{F}{F
    \in \D{}\ \text{and}\ \rank(F) \leq k}$.
\end{definition}

\begin{theorem}
  For any natural number $k$, \kD{} is NP-complete.
\end{theorem}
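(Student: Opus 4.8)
The plan is to handle the two directions separately: $\kD{}$-satisfiability is NP-hard because it already contains plain QFLIA, and it lies in NP because the translation of Section~\ref{ssec:decidability}, although exponential in general, blows up only polynomially once the rank is bounded. For hardness, I would observe that a formula built solely from atoms $T_1 \leq T_2$, negation, and disjunction (with no table subterm) has rank $0$ by Equations~\ref{eq:rankd} and~\ref{eq:rankf}, hence lies in $\kD{}$ for every $k \geq 0$; on such a formula $\sem{\cdot}$ merely eliminates pair constructors and accessors and is otherwise the identity, so it is satisfiable as an element of $\kD{}$ exactly when it is QFLIA-satisfiable. Since QFLIA-satisfiability is NP-hard (the excerpt already uses that it is \emph{in} NP, and hardness is immediate, e.g.\ by encoding Boolean SAT with $0/1$ variables), $\kD{}$ is NP-hard for every $k$.

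For membership the crux is a size bound on the translation. I would prove, by induction on the mutually recursive definition of $\sem{\cdot}$ in Equations~\ref{eq:sem:table} and~\ref{eq:sem:formula}, that every table expression $D$ satisfies $|\sem{D}| \leq (c(D)\cdot N)^{\rank(D)}$, where $N$ bounds the number of rows of any input table and $c(\cdot)$ counts input-table occurrences in an expression, and simultaneously that the textual size of $\sem{F}$, and of every side condition $b$ occurring in some $\sem{D}$, is bounded by a polynomial in $N$ and the expression's size whose degree is at most the rank. The load-bearing cases are cross product, where $|\sem{D_1 \times D_2}| = |\sem{D_1}|\,|\sem{D_2}|$ and the multiplicative blow-up is absorbed because $\rank$ \emph{adds} across $\times$, matching the product of the two inductive bounds; and union, where $|\sem{D_1 \cup D_2}| = |\sem{D_1}| + |\sem{D_2}|$ is absorbed via $a^{r} + b^{r} \leq (a+b)^{r}$ for $r \geq 1$, together with the facts that $\rank$ takes the max and $c$ adds. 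Selections and $\existsd$ go through because substituting a term for a variable does not change the rank, so nested quantifiers stay within the same rank-controlled polynomial bound.

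Granting this lemma, for a \emph{fixed} $k$ every $\kD{}$ formula $\phi$ translates, in polynomial time, into an equisatisfiable QFLIA formula $\sem{\phi}$ of size $O\big((|\phi|\cdot N)^{k}\big)$, hence polynomial in the size of the input; since QFLIA-satisfiability is in NP, this gives an NP decision procedure for $\kD{}$, and combined with the hardness argument $\kD{}$ is NP-complete. I expect the main obstacle to be precisely the inductive invariant of the size lemma: the naive claim ``$|\sem{D}|$ is polynomial'' is not preserved by cross product, so one genuinely has to put the rank in the exponent so that its additive behaviour on $\times$ offsets the multiplicative blow-up of $\sem{\cdot}$, while carrying in parallel an independent size bound on the attached arithmetic side conditions so that deeply nested selections and $\existsd$'s do not cause a secondary explosion.
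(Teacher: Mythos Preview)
Your proposal is correct and follows the same route as the paper: NP-hardness via the inclusion of QFLIA in $0\text{-}\Delta \subseteq k\text{-}\Delta$, and NP-membership via the observation that the reduction of Section~\ref{ssec:decidability} has only polynomial blow-up once $\rank$ is bounded. The paper's own argument is a two-line sketch that simply asserts the reduction of Equation~\ref{eq:sem:formula} yields polynomially-sized QFLIA formulas for fixed $k$; your inductive size lemma (with $\rank$ in the exponent so that its additive behaviour on $\times$ and on selection absorbs the multiplicative growth of $\sem{\cdot}$) is exactly the content one would supply to substantiate that assertion.
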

\begin{proof}[Proof Sketch]
  \kD{} is NP-hard , because any QFLIA formula can be reduced to a
  $0-\D{}$ formula in polynomial time ($0-\D{} \subseteq \kD{}$). We
  obtain membership in NP from the reduction defined by
  Equation~\ref{eq:sem:formula}, which produces polynomially-sized
  QFLIA formulas.
\end{proof}

\comment{\noindent Note that a very restricted fragment of~\D{}
  suffices to encode QBF. All the arithmetic needed is atomic formulas
  of the form $v = i$, where $v$ is a variable constrained in $\{0,
  1\}$ and $i$ is one of the constants $0$ and $1$; effectively, this
  is propositional reasoning. \comment{It is unlikely that
    satisfiability for unrestricted~\D{} is in PSPACE.}}

Given the class of formulas \kD{} for some $k$, the reduction produces
QFLIA formulas of size $O(n^{k+1})$, where $n$ is the input
size. While the reduction is polynomial (since $k$ is fixed), it may
not be practical even for $k = 2$, given that datasets of millions of
entries are common. Conversely, we propose restrictions that yield a
lazy solving architecture.

\section{The Existential Fragment of \D{}}
\label{sec:ed}

We proceed to study the \emph{existential fragment} of \D{}, which we
denote by \ED{}.

\begin{definition}[\ED{}]
  A \D{}~formula belongs to \ED{} if the $\existsd$ operator always
  appears below an even number of negations, \ie $\existsd$ only
  appears with positive polarity.
  \label{def:ed}
\end{definition}

The motivation for studying \ED{} is as follows. Universal
quantification pushes for an approach similar to quantifier
instantiation, \eg Example~\ref{ex:distinct} (which is not in~\ED{})
inherently requires instantiating a constraint for every element
in~$\portfolio{} \times \portfolio{}$. This can be done incrementally
by applying patterns that are standard in verification tools. In
contrast, we are not aware of techniques that would be a good match
for the kind of existential quantification that arises
in~\D{}. Therefore, the rest of this paper focuses on~\ED{}.

Formulas in~\ED{} can be transformed into formulas in a convenient
intermediate logic without cross products, selections, or unions. We
rephrase $\existsd$ in terms of a new membership operator. Each
formula of the form $\existsd D$ is viewed as $x \in D$, where $\in$
has the obvious semantics and $x$ is a properly shaped row comprised
of fresh integer variables. We will refer to rows like $x$ that serve
as witnesses for $\existsd$ as \emph{witness rows}. \comment{(Witness
  rows are analogous to Skolem constants.)} The next step is to
translate membership in arbitrary table expressions to membership in
input tables. $(x, y) \in D \times E$ becomes $x \in D \wedge y \in
E$, while $x \in D \cup E$ becomes $x \in D \vee x \in E$. Finally, $x
\in \sel{y}{F}{D}$ becomes $F[y / x] \wedge x \in D$. We eliminate all
cross products, selections, and unions by repeated application of the
above transformations.

\begin{example}
  The tables of Figures~\ref{fig:portfolio:stocks}
  and~\ref{fig:portfolio:quotes} can be easily encoded as \D{}~input
  tables of schemas $\mathtt{int} * (\mathtt{int} * \mathtt{int})$ and
  $\mathtt{int} * \mathtt{int}$. Let {\sfsm small} capitalization be
  represented by the constant $0$. Consider the following constraint:
  $$
  \existsd
  \boxed{
    \sellll{
      x
    } {
      \oleft(\oleft(x)) = \oleft(\oright(x))
    } {
      \oleft(\oright(\oleft(x))) = 0
    } {
      \oright(\oright(x)) \geq 150
    } {
      \boxed{\stocks{}} \times \boxed{\quotes{}}
    }
  }
  $$
  The constraint asserts the existence of some tuple $((x_1, (x_2,
  x_3)), (x_4, x_5)) \in \stocks{} \times \quotes{}$ that satisfies
  $\Phi = [x_1 = x_4 \wedge x_2 = 0 \wedge x_5 \geq 150]$. (We have
  eliminated the accessors $\oleft$ and $\oright$.)  This is
  equivalent to asserting that $(x_1, (x_2, x_3)) \in \stocks{} \wedge
  (x_4, x_5) \in \quotes{} \wedge \Phi$.
\end{example}

The procedure we outlined produces a \emph{decomposed} formula
consisting of a QFLIA part and \emph{membership constraints}. We
proceed to define these notions formally.

\begin{definition}[(Conditional) Membership Constraint]
  A membership constraint is a constraint of the form
  \begin{equation}
    (x_1,
    \ldots, x_k) \in \{(y_{1,1}, \ldots, y_{1,k}), \ldots, (y_{l,1},
    \ldots, y_{l,k}) \}
    \label{eq:mem}
  \end{equation}
  for positive integers $k$ and $l$ and variable symbols $x_i$,
  $y_{j,i}$. A constraint of the form $b = 1 \Rightarrow m$, where $b$
  is a variable symbol and $m$ is a membership constraint, is called a
  \emph{conditional} membership constraint.
  \label{def:mem}
\end{definition}
A membership constraint may hold conditionally, either because it
arises from an $\existsd$-atom that appears under propositional
structure (and therefore holds conditionally), or because of a
disjunction introduced by the union operator. We use conditions of the
form $b = 1$ because ILP necessitates $[0, 1]$-bounded integer
variables in place of Boolean variables.  Implication in the opposite
direction is never needed, since $\existsd$ always appears with
positive polarity (as per Definition~\ref{def:ed}).

Membership constraints do not contain arbitrary arithmetic
expressions, but only variable symbols. ``Variable
abstraction''~\cite{combiningdp} eliminates richer expressions. While
variable abstraction allows for compositional reasoning and helps with
theoretical analysis, a limited fragment of arithmetic in membership
constraints yields more efficient implementation. Part of our
discussion will involve tables that contain integer constants and
terms of the form $v + c$, where $v$ is a variable symbol and $c$ is
an integer constant. (Everything we present is easy to generalize for
such terms.) For convenience, we flatten out rows constructed using
the pair constructor of Figure~\ref{fig:syntax}, and instead deal with
$k$-tuples of integers. This is only a matter of presentation and has
no impact on the algorithms.

\begin{definition}
  A \emph{decomposed} formula is a conjunction $F \wedge M$, where
  \begin{inparaenum}[(a)]
  \item $F$ is a QFLIA formula and
  \item $M$ is a conjunction of possibly conditional membership
    constraints.
  \end{inparaenum}
  \label{def:decomposed}
\end{definition}

\begin{theorem}
  \ED{}~satisfiability is NP-complete.
  \label{thm:npcomplete}
\end{theorem}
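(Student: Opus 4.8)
The plan is to show both NP-hardness and NP-membership for \ED{}~satisfiability. Hardness is immediate: every QFLIA formula is already an \ED{}~formula (it contains no $\existsd$ at all, so the polarity condition of Definition~\ref{def:ed} holds vacuously), and QFLIA satisfiability is NP-hard; so I would dispatch this direction in one sentence. The substance is membership in NP, and here the obvious route through the reduction of Equations~\ref{eq:sem:table} and~\ref{eq:sem:formula} is \emph{not} available, because that reduction is exponential in the worst case. Instead I would argue via the decomposition into a \emph{decomposed} formula $F \wedge M$ (Definition~\ref{def:decomposed}) obtained by the syntactic transformation described just before that definition.

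First I would check that the transformation from an \ED{}~formula to a decomposed formula runs in polynomial time and produces a polynomially-sized result. The key point is that, because $\existsd$ occurs only with positive polarity, each $\existsd D$ subformula is replaced by a membership atom $x \in D$ for a fresh witness row $x$; the rewriting rules $(x,y) \in D \times E \leadsto x \in D \wedge y \in E$, $x \in D \cup E \leadsto x \in D \vee x \in E$, and $x \in \sel{y}{F}{D} \leadsto F[y/x] \wedge x \in D$ each either push membership one level down or expose a copy of a subformula $F$ already present in the input. No rule duplicates a table expression or nests one inside another, so the total size stays linear (up to the fresh variables and the flattening of pair constructors, which is linear as noted in the paper). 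Variable abstraction to remove compound terms inside memberships is likewise polynomial. Hence $F \wedge M$ has size $O(n)$, and it is equisatisfiable with the original formula.

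Next I would give the NP certificate for the decomposed formula $F \wedge M$. A nondeterministic procedure guesses, for each (possibly conditional) membership constraint in $M$, which of the $l$ listed rows the tuple $(x_1,\dots,x_k)$ is equal to — and, for a conditional constraint $b = 1 \Rightarrow m$, also guesses whether $b = 1$ (in which case the chosen row equality is asserted) or $b = 0$ (in which case nothing is asserted by that constraint). Replacing each membership constraint by the corresponding conjunction of variable equalities $x_i = y_{j,i}$ (or by the implication $b = 1 \Rightarrow \bigwedge_i x_i = y_{j,i}$, which, once the value of $b$ is fixed, is just a conjunction of equalities or is trivially true) turns $F \wedge M$ into a pure QFLIA formula $F'$ of size still polynomial in $n$. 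The original is satisfiable iff some such guess yields a satisfiable $F'$. Since there are at most $|M|$ guesses each from a domain of size $l+1 \le n+1$, the guess has polynomial length, $F'$ is built in polynomial time, and QFLIA satisfiability is in NP; composing the two nondeterministic phases keeps us in NP. Combined with hardness, this gives NP-completeness.

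The main obstacle is the size/termination analysis of the decomposition step: one must argue carefully that repeatedly applying the three rewriting rules cannot blow up the formula — in particular that pushing $x \in D_1 \cup D_2$ into a disjunction, and then continuing inside $D_1$ and $D_2$, does not cause exponential branching when unions are nested. The resolution is that each rule is applied to a table subterm and strictly decreases the structural size of the table expression governing that membership, while the QFLIA context (the $F$'s reintroduced by the selection rule) is copied only once per selection node that already occurs in the input; so a straightforward induction on the table-expression structure bounds the output size linearly. Everything else — equisatisfiability of each rewriting step, correctness of the guess-and-check certificate — is routine given the obvious semantics of $\in$.
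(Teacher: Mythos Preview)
Your proof is correct and follows essentially the paper's approach: both dispatch NP-hardness via the QFLIA embedding and obtain NP-membership by first transforming the \ED{}~formula into decomposed form $F \wedge M$ in polynomial time. The only difference is cosmetic --- where you add a nondeterministic guess of the matching row for each membership constraint before invoking the QFLIA NP-certificate, the paper simply expands each membership constraint deterministically as the QFLIA disjunction $\bigvee_{j=1}^{l}\bigwedge_{i=1}^{k} x_i = y_{j,i}$, yielding a polynomial-time many-one reduction to QFLIA; your more detailed size argument for the decomposition step is in fact more careful than the paper's, which just notes (after the proof) that table expressions are tree-shaped rather than DAG-shaped.
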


\begin{proof}
  \ED{}~satisfiability is NP-hard, because \ED{} is at least as
  powerful as QFLIA. \ED{}~satisfiability is in NP, because we can
  reduce \ED{} to QFLIA in polynomial time. The reduction first
  produces a formula in decomposed form
  (Definition~\ref{def:decomposed}). Equation~\ref{eq:mem} is
  equivalent to $\bigvee_{j = 1, \ldots, l} \bigwedge_{i = 1, \ldots,
    k} x_i = y_{j,i}$; therefore, the membership operator can be
  eliminated. The result is a formula in QFLIA.
\end{proof}

The polynomial size of the reduction relies on the fact that~\D{} does
not allow tables to be named and referenced from multiple places, \ie
table expressions are not DAG-shaped. Despite the polynomial
reduction, a lazy scheme remains relevant. The reason is that QFLIA
solvers are not meant for long disjunctions that essentially encode
database tables.

\section{\bct{} for \D{}}
\label{sec:bct}

The decomposed form of Definition~\ref{def:decomposed} is particularly
suited for a scheme that combines separate procedures for QFLIA and
table membership. Given that the QFLIA part can be encoded as a
conjunction of integer linear constraints~\cite{cav2013}, it becomes
possible to solve instances in decomposed form (and by extension
\ED{}~instances) by instantiating the \bct{} framework for
IMT~\cite{cav2013}. An ILP solver deals with the QFLIA constraints,
and exchanges information with a procedure that checks membership in
finite sets. Since database queries typically have simple
propositional structure, we do not expect encoding the latter with
linear constraints to be a bottleneck.

The membership procedure is confronted with a conjunction of
membership constraints (Definition~\ref{def:mem}). Dealing with
conditional constraints is essentially a matter of Boolean search.
The membership procedure needs to understand equality atoms, equality
being a primitive. (Our setting is standard first-order logic with
equality.)  In particular, the procedure keeps track of truth
assignments to the equalities in:
\begin{equation}
  \{x_i = y_{j,i}\ |\ j \in [1, l], i \in [1, k]\}
  \label{eq:literals}
\end{equation}
The symbols $x_i$ and $y_{j,i}$ have the same meaning as in
Definition~\ref{def:mem}. In the presence of multiple membership
constraints, the union of sets, like  in
Equation~\ref{eq:literals}, is relevant. Given that membership
constraints can be checked in isolation, our discussion proceeds with
a single constraint. The variables $x_i$ and $y_{j,i}$ also appear in
linear constraints. It simplifies our design to assume that all of
them appear in ILP, even if they are unconstrained there. The \bct{}
framework provides a mechanism (``difference
constraints''~\cite{cav2013}) for notifying background procedures
about atoms like the ones in Equation~\ref{eq:literals}. Given truth
values for these atoms, we check that a membership constraint is
satisfied by simply traversing the table and looking for a tuple that
is column-wise equal to the witness row. The constraint is violated if
for every $j \in [1, l]$, there exists some $i \in [1, k]$ such that
$x_i \neq y_{j,i}$, \ie there is no candidate tuple.

The arithmetic and membership parts share variables. It is vital that
we systematically explore the space of (dis)equalities between these
variables. This exchange of information resembles the
non-deterministic Nelson-Oppen scheme (NO) for combining decision
procedures~\cite{no79}. We demonstrate that NO can accommodate
membership constraints.

\begin{definition}[Arrangement]
  Let $E$ be an equivalence relation over a set of variables $V$. The
  set
  $$
  \alpha(V, E) = \setc{x = y}{x E y}\ \cup\ \setc{x \neq y}{x, y \in V
    \text{ and not } x E y}
  $$
  is the \emph{arrangement} of $V$ induced by $E$.
\end{definition}

\begin{definition}[Stably-Infinite Theory]
  A $\Sigma$-theory $T$ is called stably-infinite if for every
  $T$-satisfiable quantifier-free $\Sigma$-formula $F$ there exists an
  interpretation satisfying $F \wedge T$ whose domain is infinite.
\end{definition}

\begin{fact}[Nelson-Oppen for Stably-Infinite
  Theories~\cite{no79,combiningdp}]
  Let $T_i$ be a stably-infinite $\Sigma_i$-theory, for $i = 1, 2$,
  and let $\Sigma_1 \cap \Sigma_2 = \emptyset$. Also, let $\Gamma_i$
  be a conjunction of $\Sigma_i$-literals. $\Gamma_1 \cup \Gamma_2$ is
  $(T_1 \cup T_2)$-satisfiable iff there exists an equivalence
  relation $E$ of the variables $V$ shared by $\Gamma_1$ and
  $\Gamma_2$ such that $\Gamma_i \cup \alpha(V, E)$ is
  $T_i$-satisfiable, for $i = 1, 2$.
  \label{fact:no}
\end{fact}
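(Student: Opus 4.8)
The plan is to prove both directions of the biconditional model-theoretically, invoking stable-infiniteness only in the harder ($\Leftarrow$) direction. For the ($\Rightarrow$) direction, I would take a model $\mathcal{A}$ of $T_1 \cup T_2$ that satisfies $\Gamma_1 \cup \Gamma_2$ and simply read the equivalence relation off of it: declare $x\,E\,y$ exactly when $\mathcal{A}$ assigns $x$ and $y$ the same domain element. Then $\mathcal{A}$ satisfies every literal of $\alpha(V,E)$ by construction, and since the $\Sigma_i$-reduct of $\mathcal{A}$ is a model of $T_i$, it witnesses that $\Gamma_i \cup \alpha(V,E)$ is $T_i$-satisfiable, for $i = 1, 2$.

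For the ($\Leftarrow$) direction, assume such an $E$ exists, with $\mathcal{A}_i \models T_i$ and $\mathcal{A}_i \models \Gamma_i \cup \alpha(V,E)$ for $i = 1, 2$. The first step is to normalize cardinalities: because each $T_i$ is stably-infinite and $\Gamma_i \cup \alpha(V,E)$ is a quantifier-free, $T_i$-satisfiable $\Sigma_i$-formula, I can replace $\mathcal{A}_i$ by a model $\mathcal{B}_i$ of $T_i$ with an infinite domain that still satisfies $\Gamma_i \cup \alpha(V,E)$, and then, by upward/downward L\"owenheim--Skolem, arrange that $\mathcal{B}_1$ and $\mathcal{B}_2$ have domains of a common infinite cardinality $\kappa$. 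The crucial observation is that in both $\mathcal{B}_1$ and $\mathcal{B}_2$ the shared variables $V$ realize the \emph{same} arrangement — equalities and, thanks to the disequalities contained in $\alpha(V,E)$ for every pair of shared variables in distinct $E$-classes, also disequalities — so the map sending the $\mathcal{B}_1$-value of each $v \in V$ to its $\mathcal{B}_2$-value is a well-defined injection between finite subsets of the two domains. Since both domains are infinite of cardinality $\kappa$, this injection extends to a bijection $h$ between the full domains. Transporting the $\Sigma_2$-structure of $\mathcal{B}_2$ along $h$ onto the domain of $\mathcal{B}_1$ yields a $\Sigma_2$-structure isomorphic to $\mathcal{B}_2$ whose variable assignment agrees with $\mathcal{B}_1$ on $V$. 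Because $\Sigma_1 \cap \Sigma_2 = \emptyset$, these two structures share no function or predicate symbols and can therefore be fused into a single $(\Sigma_1 \cup \Sigma_2)$-structure $\mathcal{C}$ on that common domain whose $\Sigma_i$-reduct is isomorphic to $\mathcal{B}_i$; hence $\mathcal{C} \models T_1 \cup T_2$ and $\mathcal{C} \models \Gamma_1 \cup \Gamma_2$.

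I expect the main obstacle to be exactly this cardinality-matching and amalgamation step: making the fusion of the two structures coherent on the shared variables. Stable-infiniteness is precisely what rescues it — without infinite domains one cannot in general extend the bijection on the (finite) images of $V$ to all of the two domains, and indeed the combination theorem fails for theories admitting only finite models. The remaining points are routine: quantifier-free satisfiability is preserved under isomorphism and under passing to reducts, and the arrangement $\alpha(V,E)$ is by definition complete on $V$, which is what forces $\mathcal{B}_1$ and $\mathcal{B}_2$ to induce the same partition of $V$ and thus permits the bijection $h$ to exist.
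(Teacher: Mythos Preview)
The paper does not prove this statement at all: it is stated as Fact~\ref{fact:no} with citations to the original Nelson--Oppen paper and to the survey of Tinelli and Harandi, and is then used as a black box in the proof of Lemma~\ref{lemma:noprop}. There is therefore nothing in the paper to compare your argument against.

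That said, your proposal is the standard model-theoretic proof of the Nelson--Oppen combination theorem and is correct. The ($\Rightarrow$) direction is immediate as you describe. For ($\Leftarrow$), the steps --- inflate each model to an infinite one via stable-infiniteness, equalize cardinalities via L\"owenheim--Skolem, use completeness of the arrangement $\alpha(V,E)$ to get a well-defined injection on the interpretations of the shared variables, extend to a bijection of domains, transport one structure along the bijection, and fuse using $\Sigma_1 \cap \Sigma_2 = \emptyset$ --- are exactly the classical argument, and you have correctly identified stable-infiniteness as the hypothesis that makes the bijection-extension step go through. One minor point worth making explicit when you write it up: the non-shared variables of $\Gamma_2$ must be reassigned via $h^{-1}$ in the fused structure $\mathcal{C}$, but this is routine and implicit in your ``transporting along $h$'' step.
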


\begin{lemma}[Nelson-Oppen with Propositional Structure]
  Let $T_i$ be a stably-infinite $\Sigma_i$-theory, for $i = 1, 2$,
  and let $\Sigma_1 \cap \Sigma_2 = \emptyset$. Also, let $\Phi_i$
  be a quantifier-free $\Sigma_i$-formula. $\Phi_1 \wedge \Phi_2$ is
  $(T_1 \cup T_2)$-satisfiable iff there exists an equivalence
  relation $E$ of the variables $V$ shared by $\Phi_1$ and $\Phi_2$
  such that $\set{\Phi_i} \cup \alpha(V, E)$ is $T_i$-satisfiable,
  for $i = 1, 2$.
  \label{lemma:noprop}
\end{lemma}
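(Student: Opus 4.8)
The plan is to derive Lemma~\ref{lemma:noprop} from the literal-level statement of Fact~\ref{fact:no} by a standard ``literal selection'' step that replaces each propositional formula $\Phi_i$ by a conjunction of $\Sigma_i$-literals it entails. Concretely, given any interpretation $\mathcal{M}_i$ that satisfies $\Phi_i$, I let $\mathrm{At}(\Phi_i)$ be the set of atoms occurring in $\Phi_i$ and form the conjunction $\Gamma_i$ of those literals over $\mathrm{At}(\Phi_i)$ that $\mathcal{M}_i$ makes true. Since $\Phi_i$ is quantifier-free, $\Gamma_i$ propositionally entails $\Phi_i$ (treating atoms as propositional variables), so $\Gamma_i \models \Phi_i$ in every $\Sigma_i$-interpretation, and in particular $\mathcal{M}_i \models \Gamma_i$.

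For the forward direction there is essentially nothing to do beyond unwinding definitions, and Fact~\ref{fact:no} is not needed: if $\mathcal{M}$ satisfies $T_1 \cup T_2$ and $\Phi_1 \wedge \Phi_2$, define $E$ on $V$ by $x\,E\,y$ iff $\mathcal{M}(x) = \mathcal{M}(y)$; then $\mathcal{M} \models \alpha(V, E)$ by construction, and since $\mathcal{M} \models \Phi_i \wedge T_i$, the set $\{\Phi_i\} \cup \alpha(V, E)$ is $T_i$-satisfiable. For the converse, suppose $E$ is an equivalence relation on $V$ with models $\mathcal{M}_i \models T_i \cup \{\Phi_i\} \cup \alpha(V, E)$ for $i = 1, 2$. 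Applying the literal-selection step to each $\mathcal{M}_i$ yields $\Gamma_i$ with $\mathcal{M}_i \models \Gamma_i \cup \alpha(V, E)$, so $\Gamma_i \cup \alpha(V, E)$ is $T_i$-satisfiable. The variables of $\Gamma_i$ occur in $\Phi_i$, so the set $V'$ of variables shared by $\Gamma_1$ and $\Gamma_2$ satisfies $V' \subseteq V$, and $E' = E \cap (V' \times V')$ is an equivalence relation on $V'$ with $\alpha(V', E') \subseteq \alpha(V, E)$; hence $\Gamma_i \cup \alpha(V', E')$ is $T_i$-satisfiable as well. Now Fact~\ref{fact:no} applies (its hypotheses — stable infiniteness of the $T_i$ and $\Sigma_1 \cap \Sigma_2 = \emptyset$ — are exactly those of the lemma) and gives a model of $(T_1 \cup T_2) \cup \Gamma_1 \cup \Gamma_2$, which also satisfies $\Phi_1 \wedge \Phi_2$ because $\Gamma_i \models \Phi_i$. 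This completes the argument.

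The theory-combination content is handled entirely by Fact~\ref{fact:no}; the part that needs the most care is the scoping bookkeeping — checking that the variables shared by the selected literal conjunctions $\Gamma_i$ never escape $V$, so that the arrangement supplied by the lemma's hypothesis can be reused verbatim (after restriction to $V'$) as the arrangement demanded by Fact~\ref{fact:no}. A secondary routine point that should nonetheless be stated explicitly is that a satisfying interpretation of a quantifier-free formula induces a conjunction of literals entailing it. Beyond these, the proof is a direct instantiation of Fact~\ref{fact:no}.
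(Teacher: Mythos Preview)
Your proof is correct and follows essentially the same approach as the paper's: the forward direction reads off the arrangement from a satisfying model, and the backward direction selects, for each $\Phi_i$, the conjunction $\Gamma_i$ of literals over its atoms that the given $T_i$-model makes true, then invokes Fact~\ref{fact:no}. Your extra bookkeeping about restricting to $V' \subseteq V$ is harmless but in fact unnecessary, since $\Gamma_i$ mentions exactly the variables of $\Phi_i$ and hence $V' = V$; the paper simply applies Fact~\ref{fact:no} with the original $V$ and $E$.
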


\begin{proof}
  \hfill
  \begin{itemize}
  \item If $\Phi_1 \wedge \Phi_2$ is $(T_1 \cup T_2)$-satisfiable,
    there exists a first-order model $L$ that satisfies it. The way
    $L$ interprets the variables in $V$ gives rise to an equivalence
    relation $E$ over $V$ such that $L$ satisfies $T_i \cup
    \set{\Phi_i} \cup \alpha(V, E)$, $i = 1, 2$.
  \item If there exists an equivalence relation $E$ over $V$ such that
    $\set{\Phi_i} \cup \alpha(V, E)$ is $T_i$-satisfiable, then there
    exists a model $L_i$ that $T_i$-satisfies $\Phi_i$, $i = 1,
    2$. Let
    \begin{align*}
      \Gamma_i =~&~\{t\ |\ t \text{ is
        an atom in } \Phi_i, L_i \models t\}~~\cup &\\
      &~\{\neg t\ |\ t \text{ is
        an atom in } \Phi_i, L_i \models \neg t\},&i = 1, 2.
    \end{align*}
    $\Gamma_i \cup \alpha(V, E)$ is $T_i$-satisfiable, $i = 1, 2$. By
    Fact~\ref{fact:no}, $\Gamma_1 \cup \Gamma_2$ is $(T_1 \cup
    T_2)$-satisfiable. But $\Gamma_1 \cup \Gamma_2 \models \Phi_1
    \wedge \Phi_2$. Therefore, $\Phi_1 \wedge \Phi_2$ is $(T_1 \cup
    T_2)$-satisfiable.
  \end{itemize}
\end{proof}

\begin{lemma}[Nelson-Oppen for Membership Constraints]
  \hfill Let $T$ be a stably-infinite $\Sigma$-theory. Also, let
  $\Gamma$ be a conjunction of $\Sigma$-literals, and $M$ be a
  conjunction of possibly negated membership constraints.  $\Gamma
  \cup M$ is $T$-satisfiable iff there exists an equivalence relation
  $E$ of the variables $V$ shared by $\Gamma$ and $M$ such that
  $\Gamma \cup \alpha(V, E)$ is $T$-satisfiable and $M \cup \alpha(V,
  E)$ is satisfiable.
  \label{lemma:nomem}
\end{lemma}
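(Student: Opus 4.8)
The plan is to recognize this lemma as an instance of Lemma~\ref{lemma:noprop}, with the second theory taken to be the (empty) theory of equality $T_=$ over an empty non-logical signature $\Sigma_= = \emptyset$. The enabling observation is that a membership constraint, and equally its negation, is a quantifier-free formula whose only symbols are variables and equality: Equation~\ref{eq:mem} is equivalent to $\bigvee_{j=1}^{l} \bigwedge_{i=1}^{k} x_i = y_{j,i}$, and its negation is equivalent to $\bigwedge_{j=1}^{l} \bigvee_{i=1}^{k} x_i \neq y_{j,i}$. Hence the conjunction $M$ is (equivalent to) a quantifier-free $\Sigma_=$-formula $\Phi_2$, while the $\Sigma$-literal conjunction $\Gamma$ is trivially a quantifier-free $\Sigma$-formula $\Phi_1$.

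First I would instantiate Lemma~\ref{lemma:noprop} with $T_1 = T$, $\Phi_1 = \Gamma$, $T_2 = T_=$, and $\Phi_2 = M$. Signature disjointness, $\Sigma \cap \Sigma_= = \emptyset$, is immediate since $\Sigma_=$ contains no non-logical symbols, and the shared variables of $\Phi_1$ and $\Phi_2$ are precisely the set $V$ of variables shared by $\Gamma$ and $M$. Lemma~\ref{lemma:noprop} then yields: $\Gamma \wedge M$ is $(T \cup T_=)$-satisfiable iff there is an equivalence relation $E$ over $V$ with $\set{\Gamma} \cup \alpha(V,E)$ $T$-satisfiable and $\set{M} \cup \alpha(V,E)$ $T_=$-satisfiable. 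It only remains to rephrase both sides in the terms of the present statement: $(T \cup T_=)$-satisfiability of $\Gamma \wedge M$ coincides with $T$-satisfiability of $\Gamma \cup M$ (the empty theory imposes no constraints), and $T_=$-satisfiability of $\set{M} \cup \alpha(V,E)$ is just ordinary first-order satisfiability with equality of $M \cup \alpha(V,E)$, which is exactly what the lemma means by ``satisfiable''.

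The only hypothesis of Lemma~\ref{lemma:noprop} that needs a brief argument is that $T_=$ is stably infinite: a satisfiable quantifier-free equality formula over finitely many variables forces only finitely many equivalence classes among them, and any such pattern is realizable in a domain of arbitrarily large (in particular infinite) cardinality, so every $T_=$-satisfiable formula has an infinite model; combined with $T$ stably infinite by assumption, the premises of Lemma~\ref{lemma:noprop} are met. I expect the main (and quite mild) obstacle to be pure bookkeeping — making explicit that ``satisfiable'' for $M$ together with an arrangement is the same equality-with-models notion used throughout, and that negating a membership constraint keeps us inside the equality fragment. Should a referee prefer a self-contained argument, one can instead mirror the proof of Lemma~\ref{lemma:noprop}: from a model of $\set{M} \cup \alpha(V,E)$ extract the induced literal set $\Gamma_M$ over the equality atoms occurring in $M$, observe $\Gamma_M \models M$, and invoke Fact~\ref{fact:no} with the equality theory in place of $T_2$; but routing through Lemma~\ref{lemma:noprop} is the cleaner presentation.
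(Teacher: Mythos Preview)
Your proposal is correct and follows essentially the same route as the paper: both reduce the lemma to Lemma~\ref{lemma:noprop} by observing that (possibly negated) membership constraints are quantifier-free formulas in the empty signature, taking $T_1 = T$ and $T_2$ the empty theory, and noting the empty theory is stably infinite. Your write-up is more explicit about the bookkeeping (signature disjointness, the meaning of ``satisfiable'' for $M$, the stably-infinite argument), but there is no substantive difference in strategy.
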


\begin{proof}
  Membership constraints can be viewed as disjunctions of conjunctions
  (Proof of Theorem~\ref{thm:npcomplete}) in which no function,
  predicate, and constant symbols appear, \ie in the empty
  signature. The theory pertaining the membership constraints is the
  empty theory ($\emptyset$), since no axioms are needed. $\emptyset$
  is trivially stably-infinite. Our proof obligation follows by
  applying Lemma~\ref{lemma:noprop} with $T_1 = T$ and $T_2 =
  \emptyset$.
\end{proof}

Note that Lemma~\ref{lemma:nomem} allows negated membership
constraints. While the latter do not pose algorithmic difficulties,
our discussion is limited to the positive occurrences needed for
\ED{}. The statement of Lemma~\ref{lemma:nomem} is structurally
similar to that of Fact~\ref{fact:no}, with membership constraints
replacing the constraints of some participating stably-infinite
theory. It follows that a membership procedure can participate in NO
as a black box, much like a theory solver, even though we have not
formalized membership constraints by means of a theory. We can thus
combine a form of set reasoning with any stably-infinite theory.

\bct{} guarantees completeness for the combination of ILP with a
stably-infinite theory~\cite{cav2013} by ensuring that the branching
strategy explores all possible arrangements. We established that
membership can be used much like a stably-infinite theory. All that is
needed for completeness is a membership procedure capable of checking
consistency of its constraints conjoined with a given arrangement
(that contains all literals of Equation~\ref{eq:literals}). As we have
seen, this operation is simple and involves no arithmetic. In pursuit
of efficiency, we proceed to describe branching and propagation
techniques based on table contents. Meaningful branching and
propagation involve the integer bounds of variables, \ie necessitate
limited arithmetic reasoning on the membership side.

\subsection{Propagation}
\label{ssec:prop}

\bc{}-based ILP solvers keep track of variable lower and upper bounds,
and heavily rely on bounds propagation algorithms. We describe how to
enhance such propagation to exploit the structure of membership
constraints.

We denote by $\lb(v)$ an $\ub(v)$ the current lower and upper bounds
on variable~$v$. $\lb(v)$ (respectively $\ub(v)$) is either an integer
constant, or $-\infty$ (resp. $+\infty$) if no bound is known. We use
the notation $\lbp(v)$ and $\ubp(v)$ for bounds on $v$ that the
membership procedure infers. \comment{(The prime symbol indicates
  ``next state.'')} We proceed with a membership constraint as per
Definition~\ref{def:mem}.  Let $x = (x_1, \ldots, x_k)$; similarly, we
denote by $y_j$ the tuple $(y_{j,1}, \ldots, y_{j,k})$. Let $\match(x,
y_j)$ be true if and only if for all $i \in [1, k]$, the sets
$[\lb(x_i), \ub(x_i)]$ and $[\lb(y_{j,i}), \ub(y_{j,i})]$ intersect.

\vspace*{-1em}
\footnotesize
\begin{align}
  \lbp(x_i) = \max(\lb(x_i),&\min \{ \lb(y_{j,i})\ |\ j \in [1, l],
  \match(x, y_j) \})
  \label{eq:proplb} \\
  \ubp(x_i) = \min(\ub(x_i),&\max \{ \ub(y_{j,i})\ |\ j \in [1, l],
  \match(x, y_j) \})
  \label{eq:propub}
\end{align}
\vspace*{-16pt}
\normalsize

\noindent We over-approximate the values of the variables $x_i$ by
considering all candidate entries (inner $\min$ and $\max$). The outer
$\max$ and $\min$ guarantee that we do not weaken bounds. If there
exists exactly one value $j$ such that $\match(x, y_j)$, it is sound
to deduce the equalities $x_i = y_{j,i}$, for all $i \in [1, k]$. If
there is no candidate entry, inconsistency is reported.

\begin{example}[Interleaved Propagation]
  Consider the decomposed formula $x = y \wedge (x, y) \in \{(1, 2),
  (2, 4), (3, 6), (4, 8)\}$. The formula corresponds to a query over
  concrete tuples that any DBMS can evaluate in linear time. It is
  thus vital that our techniques yield acceptable
  performance. Equations~\ref{eq:proplb} and~\ref{eq:propub} bound $x$
  to $[\min \{1, 2, 3, 4\}, \max \{1, 2, 3, 4\}] = [1, 4]$ and y to
  $[\min \{2, 4, 6, 8\}, \max \{2, 4, 6, 8\}] = [2, 8]$.  Given the
  equality $x = y$, ILP propagation deduces that $x, y \in [2, 4]$,
  since $[2, 4]$ is the intersection of permissible ranges for $x$ and
  $y$. The membership procedure detects that $\match$ now only holds
  for $(2, 4)$, and fixes $x$ to $2$ and $y$ to $4$. The ILP solver in
  turn deduces unsatisfiability, since $x = y$ is violated. No
  branching was needed. Encoding the formula in QFLIA would hide its
  structure, leading to search. The example generalizes to other
  lengths and bounded symbolic data.
  \label{ex:interleavedprop}
\end{example}

\subsection{Branching and Arrangement Search}
\label{ssec:branching}

It follows from Lemma~\ref{lemma:nomem} that a branching strategy
which exhaustively explores all possible arrangements of the shared
variables guarantees completeness. To achieve better performance, we
have to branch with the tabular structure of databases in mind,
without overlooking symbolic data.

\begin{figure}

  \center

  \begin{tikzpicture}[auto, font=\scriptsize]

    \node[draw = black, dotted, label = right:$(0)$] (p0) at (0, 0)
    {
      \(
      (x_1, x_2) \in
      \boxed{
        \begin{aligned}
          \{&(1, 2), \\
          &(2, 3), \\
          &(3, 2), \\
          &(y_1, y_2) \}
        \end{aligned}
      }
      \)
    };

    \node[draw = black, dotted, label = right:$(1)$] (p1)
    at (-65pt, -58pt)
    {
      \(
      (x_1, x_2) \in
      \boxed{
        \begin{aligned}
          \{&(1, 2), \\
          &(y_1, y_2) \}
        \end{aligned}
      }
      \)
    };

    \node[draw = black, dotted, label = right:$(2)$] (p2)
    at (55pt, -58pt)
    {
      \(
      (x_1, x_2) \in
      \boxed{
        \begin{aligned}
          \{&(2, 3), \\
          &(3, 2), \\
          &(y_1, y_2) \}
        \end{aligned}
      }
      \)
    };

    \draw[->] (p0) -- (p1) node[midway, left, xshift = -8pt, yshift =
    2pt] {$x_1 < 2$};

    \draw[->] (p0) -- (p2) node[midway, right, xshift = 10pt, yshift =
    2pt] {$x_1 \geq 2$};

    \node[draw = black, dotted, label = right:$(3)$] (p3)
    at (-65pt, -118pt)
    {
      \(
      (x_1, x_2) \in
      \boxed{
        \begin{aligned}
          \{&(2, 3) \\
          &(3, 2)\} \\
        \end{aligned}
      }
      \)
    };

    \node[draw = black, dotted, label = right:$(4)$] (p4)
    at (55pt, -118pt)
    {
      \(
      (x_1, x_2) \in
      \boxed{
        \begin{aligned}
          \{&(2, 3) \\
          &(3, 2), \\
          &(y_1, y_2) \}
        \end{aligned}
      }
      \)
    };

    \draw[->] (p2) -- (p3) node[midway, left, xshift = -8pt]
    {$x_1 \neq y_1$};
    \draw[->] (p2) -- (p4) node[midway, right, xshift = 8pt]
    {$x_1 = y_1$};

  \end{tikzpicture}


  \caption{Data-Driven Branching}
  \label{fig:branching}

\end{figure}

Figure~\ref{fig:branching} provides an example. The root node (Node
$0$) describes a single membership constraint, which we assume to be
part of a larger decomposed formula. We maintain integer constants in
the table, instead of performing variable abstraction which would
introduce auxiliary variables for them. According to
Equation~\ref{eq:literals}, the membership procedure needs truth
assignments for the equalities in $\{x_1 = 1, x_1 = 2, x_1 = 3, x_1 =
y_1, x_2 = 2, x_2 = 3, x_2 = y_2\}.$ It would not be wise for the
search strategy to overlook that this set originates from a table
containing numbers, and treat the set members as if they were atomic
propositions unrelated to each other.

In our example, branching on the condition $x_1 < 2$ produces two
subproblems. Node 1 shows only the tuples that still apply under the
condition $x_1 < 2$, \ie the ones that still satisfy the predicate
$\match$; similarly for Node 2. $x_1 < 2$ is a choice informed by the
tabular structure. Since $2$ as the value of the first column is close
to the ``middle'' of the table, branching on $x_1 < 2$ rules out
approximately half of the candidates. $(y_1, y_2)$ is present in both
subproblems (Nodes 1 and 2). Branching based on constant bounds is
therefore not enough, for we will possibly have to deal with symbolic
tuples. Figure~\ref{fig:branching} demonstrates further branching on
$x_1 = y_1$ to determine whether $(y_1, y_2)$ is a suitable witness
for the membership constraint.

The example demonstrates the dual nature of the search strategy
needed. The problem naturally pushes towards branch-and-bound (which
is a restriction of \bc{}), \eg branching on $x_1 < 2$ is
meaningful. It remains necessary to also branch on equalities between
shared variables (\eg $x_1 = y_1$), just like in any practical
implementation of NO.  (To be precise, in ILP we would have two
separate nodes for $x_1 < y_1$ and $x_1 > y_1$ in place of $x_1 \neq
y_1$.)  Implementing NO with \bc{} enables both kinds of branching.

Branching is organically tied to propagation. Initially (Node $0$),
assuming no previously known bounds for $x_1$, the table contents only
allow us to bound $x_1$ to the range $[\min(\lb(y_1), 1),
\max(\ub(y_1), 3)]$; if $y_1$ is unbounded, $x_1$ remains
unbounded. The decisions $x_1 \geq 2$ and $x_1 \neq y_1$ (\ie Node
$3$) tighten $x_1$ to $[2, 3]$. We also obtain the range $[2, 3]$ for
$x_2$, \ie branching on some column potentially leads to propagation
across other columns.

\subsection{Discussion}

The analysis of this Section indicates that \D{}~formulas can be
decomposed in such a way that a procedure for table lookup assumes
part of the workload. \bct{} is particularly suited for implementing
such a combination. \bct{} can easily accommodate data-aware
propagation (Section~\ref{ssec:prop}) and branching
(Section~\ref{ssec:branching}). Our techniques would be harder to
implement within a \dpllt{}-style solver~\cite{dpllt}, given that the
toplevel search of \dpllt{} is over the Booleans (and not the
integers). A \dpllt{}-based implementation of our techniques would
essentially require integrating branch-and-bound in \dpllt{}, which
is beyond the scope of our work.

The table lookup procedure can be thought of as a small database
engine within the solver. The employed database engine can be an
actual DBMS, storing the concrete part of tables and possibly bounds
on symbolic fields. A DBMS would provide multiple opportunities for
improvements.  Equations~\ref{eq:proplb} and~\ref{eq:propub}
essentially describe database aggregation, and thus provide a starting
point for the kinds of queries that apply. DBMS queries can be over
multiple tables at a time, and can involve conditions other than
bounds. As a matter of fact, the $\match$ predicate of
Equations~\ref{eq:proplb} and~\ref{eq:propub} can be strengthened with
any condition on the data that follows from the formula (\eg $x = y$
in Example~\ref{ex:interleavedprop}), thus computing tighter
bounds. Different kinds of database optimizations apply, \eg
materializing queries for better incremental behavior and smarter
indexing based on user input.

\ED{} (and its decomposed form) formally characterizes a relevant
class of problems that can be solved by a compositional scheme which
employs a database engine. Our scheme may actually apply to a superset
of~\ED{}.

\section{Applications and Experiments}
\label{sec:experiments}

We have implemented support for databases on top of the \inez{}
constraint
solver.\footnote{\texttt{https://github.com/vasilisp/inez}} \inez{}
is our implementation of the \bct{} architecture for IMT on top of the
SCIP (M)ILP solver~\cite{scip}. We refer to the version of \inez{}
that provides database extensions as \inezdb{}.  \inezdb{} supports
existential database constraints by means of the \bct{}-based
combination described in Section~\ref{sec:bct}, but also universal
quantification by eager instantiation. \inezdb{} (like \inez{})
additionally supports objective functions.

We have produced a collection of \inezdb{} input files that have the
structure we expect in applications. Our benchmark suite is publicly
available and can be used as a starting point towards a richer
benchmark suite of problems that involve data and
constraints.\footnote{\texttt{http://www.ccs.neu.edu/home/vpap/fmcad-2014.html}}
We provide a brief overview of the application areas that inspire our
benchmarks.

\vspace*{-10pt}
\subsection{How-To Analysis}

Research in the general direction of reverse data
management~\cite{revdata} proposes ways of obtaining the desired
results out of a database query. \comment{This can be achieved either
  by modifying the data~\cite{tiresias} or by changing the query
  itself~\cite{conquer}.} We outline this class of problems through an
example, which gives rise to some of our benchmarks.

\begin{example}[\texttt{emp\_join.ml}]
  The management of a company is surprised to find out that (according
  to the corporate database) there is no employee younger than 30
  whose yearly income exceeds \$60000. \emph{Why not} is not obvious,
  since income is a complicated function of multiple quantities
  including a base salary, benefits based on age, employee level
  (junior, middle, or senior), and bonuses.

  The management consults the database administrator on \emph{how
    to}~\cite{tiresias} ameliorate the seeming injustice. Together,
  they explore bonuses that would allow young employees to exceed the
  \$60000 limit. This amounts to synthesizing tuples for the table of
  bonuses. An alternative is to adjust various parameters in the
  income computation, \ie to modify the query instead of the
  data~\cite{conquer}. This can be done by replacing constants with
  variables, and letting the solver come up with suitable values.
\end{example}

\vspace*{-10pt}
\subsection{Test-Case Generation}

Test case generation is relevant for databases~\cite{qex}. A family of
benchmarks in our collection demonstrate test data generation by
concretizing tables initially containing symbolic data.

\begin{example}[\texttt{emp\_keys.ml}]
  The problem involves two tables, named $\mathtt{incomes}$ and
  $\mathtt{employees}$. $\mathtt{incomes}$ has an ID column
  constrained to reference existing entries in $\mathtt{employees}$,
  \ie there is a foreign key constraint. $\mathtt{incomes}$ contains
  thousands of tuples with symbolic IDs. A satisfying assignment
  corresponds to a generated database that meets the foreign key
  constraint, thus serving as meaningful test input.
\end{example}

\vspace*{-10pt}
\subsection{Scientific Applications}

Studying big datasets is a key aspect of scientific research in fields
ranging from ornithology~\cite{mirekdddm} to
astronomy~\cite{skyserver}. To demonstrate the applicability of our
techniques, we provide benchmarks inspired by queries that
ornithologists perform.

\begin{example}[\texttt{birds\_box.ml}]
  An ornithologist wants to see a rare species in person, but has not
  decided on a good location.  She has access to a database of
  observations. Each observation describes a bird and the geographic
  coordinates where it was seen. An area can be described as a
  symbolic rectangle $B = [\mathtt{longitude}_{\mathtt{min}},
  \mathtt{longitude}_{\mathtt{max}}] \times
  [\mathtt{latitude}_{\mathtt{min}},
  \mathtt{latitude}_{\mathtt{max}}]$. Our techniques allow the
  ornithologist to simply ask for $n$ observations of the species of
  interest that lie in $B$. The query effectively concretizes $B$.
\end{example}

\vspace*{-10pt}
\subsection{Portfolio Management}

We experimented with the portfolio optimization example of
Section~\ref{sec:motivation}. Our exact instance
(\texttt{portfolio.ml}) encodes a more complex variant of the
formalization in Section~\ref{sec:motivation}. An additional table
contains stock dividends; dividends are taken into account in the
objective function. We tried a range of parameters with a timeout of
one hour, and obtained a range of solutions. Notably, picking an
optimal portfolio of 5 out of 50 stocks took 161 seconds; 5 out of
4000 stocks took 1510 seconds; and 6 out of 2000 stocks took 1172
seconds. Such table sizes are realistic, given that NYSE lists
approximately 2800 companies.

\vspace*{-10pt}
\subsection{Overview of Results}

\begin{figure}[t]

  \center
  
  \subfloat [\inezdb{} versus \inez{}] {
    \includegraphics [trim= 0 10pt 0 10pt]
    {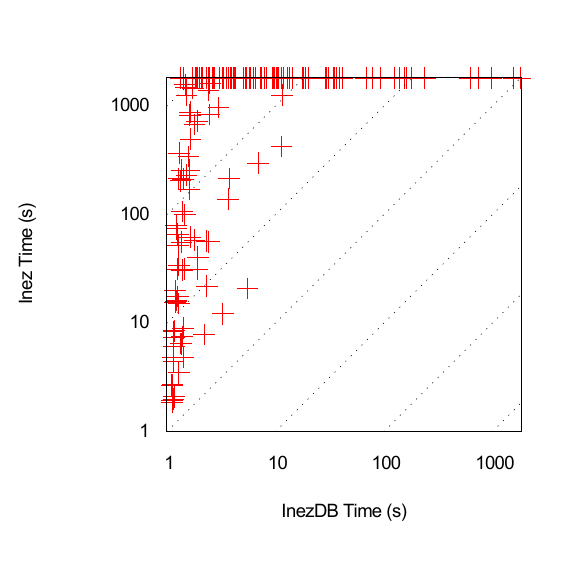}
    \label{fig:exp:vsinez}
  }

  \subfloat [\inezdb{} versus Z3] {
    \includegraphics [trim=0 10pt 0 10pt]
    {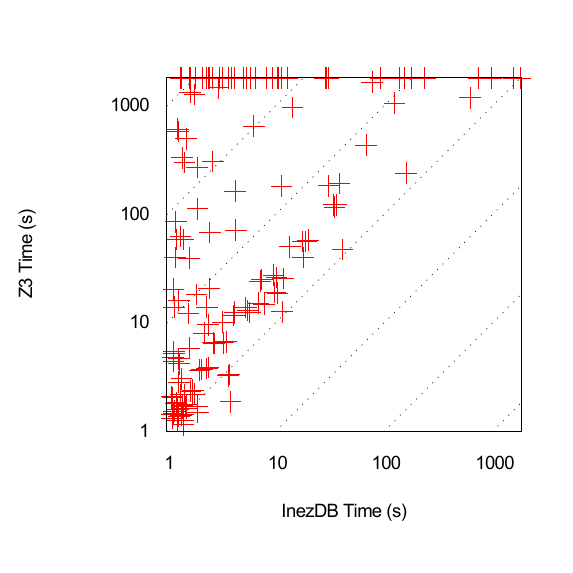}
    \label{fig:exp:vsz3}
  }

  \caption{Experiments: \inezdb{} versus the eager approach}
  \label{fig:exp}

\end{figure}

We compare \inezdb{} against an \inez{} frontend that solves \D{}
formulas by eagerly translating them to QFLIA via the encoding of
Theorem~\ref{thm:npcomplete}. \inez{} in turn solves QFLIA formulas by
reducing them to constraints that SCIP understands.  (These
constraints are not strictly ILP, since we utilize specialized
constraint handlers~\cite{scip}.)  We refer to this configuration
simply as \inez{}, since the only addition to \inez{} is a new
frontend. We also produce SMT-LIB versions of our QFLIA formulas, and
run them against the latest available version of Z3 (4.3.1).

We provide 8 benchmark generators that allow different modes of
operation (\eg some of them are able to produce both satisfiable and
unsatisfiable benchmarks), and are able to output benchmarks with
different table sizes. Our input table sizes range from 60 tuples to
640000 tuples. In total, our parameters give rise to 166
benchmarks. We run all three solvers with a timeout of 1800 seconds
and a memory limit of 12GB on a machine that provides 2 Intel Xeon
X5677 CPUs of 4 cores each and 96GB of RAM. Figure~\ref{fig:exp}
visualizes our experiments. \inez{} solves 25 satisfiable and 47
unsatisfiable benchmarks. \inezdb{} solves 74 satisfiable and 81
unsatisfiable benchmarks. Finally, Z3 solves 57 satisfiable and 58
unsatisfiable benchmarks. Among the failures for \inez{} (resp. Z3),
37 (resp. 27) are due to the memory limit. \inezdb{} runs out of
memory only once. If we turn off the memory limits, the total numbers
of failures don't change much.

\comment{\inez{} times out for 57 of the benchmarks, and runs out of
  memory for 37 of them.  \inezdb{} times out for 10 of the benchmarks
  and runs out of memory once. Finally, Z3 times out for 24 of the
  benchmarks and runs out of memory 27 times.}

Figure~\ref{fig:exp:vsinez} indicates that \inezdb{} outperforms
\inez{} by a significant margin. This margin can be attributed to two
factors. First, \inezdb{} exploits the structure of database problems
(\eg for branching and propagation), while \inez{} has no knowledge of
this structure. Second, our reduction to QFLIA (in the case of
\inez{}) produces patterns that SCIP is not optimized for, since the
latter is designed for MILP and not for QFLIA.

Figure~\ref{fig:exp:vsz3} compares \inez{} against a leading solver
for QFLIA (Z3), and thus characterizes the tool's performance in
absolute terms. There is a cluster of 40 benchmarks for which
\inezdb{} is 2-8 times faster than Z3. (Note that the scale is
logarithmic.) \inezdb{} is at least 8 times faster for 31 of the
benchmarks that both tools solve, and solves many benchmarks for which
Z3 times out. All failures for \inezdb{} are failures for Z3. Z3
outperforms \inezdb{} for only 7 out of the 166 benchmarks, none of
which take \inezdb{} more than 4 seconds to solve. \comment{In light
  of these experiments, our techniques of Section~\ref{sec:bct} do not
  just allow \inezdb{} to outperform \inez{}, but also to cover the
  gap between \inez{} and Z3 and surpass Z3.}

We conclude the evaluation by pointing out that there is significant
room for improvement in \inezdb{}. As is the case with almost every
first implementation of a new decision procedure, there is room for
improvement, \eg \inezdb{} can benefit from better preprocessing and
more sophisticated branching. \inezdb{} can also be improved by
adopting database techniques (as we outlined in
Section~\ref{sec:bct}), or by integrating a DBMS. Our promising
experimental results even without such optimizations constitute
sufficient evidence that ILP Modulo Data is a viable design for
data-enabled reasoning tools.

\comment{
  The data-specific portion of \inezdb{} is a straightforward
  implementation of the techniques described in Section~\ref{sec:bct} in
  less than 1500 lines of code. To name a few limitations,
  \begin{inparaenum}[(a)]
  \item we handle universal quantification by eager expansion;
  \item Our branching strategy is naive;
  \item as per Section~\ref{sec:bct}, our procedure only looks at one
    table and one witness row at a time;
  \item we always index the first column of each table, which is an
    arbitrary choice; and
  \item we apply no preprocessing.
  \end{inparaenum}
}

\section{Related Work}
\label{sec:related}


The Constraint Database framework~\cite{kanellakis90} provides a
database perspective on constraint solving. The framework encompasses
relations described by means of constraints, but not relations
comprised of concrete tuples.

``Table constraints''~\cite{tablecp,dstablecp}, as studied in
Constraint Programming, resemble our membership constraints. Such
tables are not meant as database tables. Our work differs in
significant ways, \eg our setup allows symbolic table
contents. Also, the algorithms presented for table constraints rely on
table contents from small domains (\ie not the reals or the
integers). This aligns with the overall emphasis of Constraint
Programming, but conflicts with our intended applications.

Veanes et al. describe the Qex technique and tool that uses Z3 to
generate tests for SQL queries~\cite{qex}. Qex essentially encodes the
relational operators via axioms, which are later instantiated via
E-matching~\cite{ematching}. E-matching is a generic scheme that is
not optimized in any way for database problems. Qex is geared towards
relatively small tables that suffice as test cases, while our target
applications involve bigger tables.

Other approaches tackle constraints arising in database applications
with off-the-shelf generic solvers (via eager reductions).  Notably,
Khalek et al. use Alloy~\cite{alloyqueries}, while Meliou and Suciou
use MILP~\cite{tiresias}. In neither of these approaches does the core
of the solver exploit the structure of database instances, \eg for
branching or propagation.

\balance

\section{Conclusions and Future Work}
\label{sec:conclusions}

We introduced the ILP Modulo Data framework for marrying data with
symbolic reasoning.  To that end, we introduced the decidable
logic~\D{}. \comment{ for reasoning about databases. \D{}~is powerful
  enough for many database analysis tasks. Deciding~\D{} is a daunting
  task, primarily because of quantifiers.} We identified a fragment of
\D{}~that can be solved efficiently by instantiating the \bct{}
architecture. We developed a solver for \D{}, and evaluated this
solver on a set of benchmarks that we made publicly available.

There are many interesting research directions to be explored in
future work, including:
\begin{inparaenum}[(a)]
\item the design and implementation of solvers that include an actual
  DBMS,
\item efficiently handling universal quantification over big tables,
  say by partitioning input tables and using parallelization,
\item extending our techniques to allow mixed integer, real
  arithmetic, and other first-order theories, and
\item solving interesting business and scientific applications using
  the ILP Modulo Data framework.
\end{inparaenum}

\bibliographystyle{plain}
\bibliography{paper}

\end{document}